\documentclass[11pt]{article}
\pdfoutput=1  

\usepackage[T1]{fontenc}
\usepackage[utf8]{inputenc}
\usepackage{mathpazo}
\usepackage{microtype}
\usepackage[english]{babel}
\usepackage[a4paper,top=2.6cm,bottom=2.6cm,left=3cm,right=3cm]{geometry}
\usepackage{amsmath,amssymb,amsthm}
\usepackage{booktabs}
\usepackage{tabularx}
\usepackage{enumitem}
\usepackage{tikz}
\usetikzlibrary{arrows.meta,positioning,calc}
\usepackage{graphicx}
\usepackage[hidelinks]{hyperref}
\hypersetup{
  pdftitle={A Calculus of Discernment: Decision-Relevant Insight, Sequence Value, and Forgetting as Higher-Order Learning},
  pdfauthor={Suyash Mishra},
  pdfsubject={Value of information; causal inference; adaptive memory; forgetting},
  pdfkeywords={value of information; causal inference; sequence optimisation; renormalisation; information bottleneck; catastrophic forgetting; adaptive memory}
}

\theoremstyle{plain}
\newtheorem{proposition}{Proposition}
\newtheorem{conjecture}{Conjecture}

\theoremstyle{definition}
\newtheorem{definition}{Definition}
\newtheorem{axiom}{Axiom}
\newtheorem{principle}{Principle}
\theoremstyle{remark}

\newcommand{\Eb}[1]{\mathbb{E}\!\left[#1\right]}
\newcommand{\dofn}{\mathrm{do}}
\newcommand{\val}{\operatorname{val}}
\newcommand{\II}{\mathrm{I}}

\title{\textbf{A Calculus of Discernment:}\\[2pt]
\large Decision-Relevant Insight, Sequence Value, and Forgetting as Higher-Order Learning}

\author{%
  Suyash Mishra\thanks{Correspondence: \texttt{zurich.suyash@gmail.com}, \texttt{suyash.mishra@roche.com}.\ \emph{Disclosure:} this is independent research; the views expressed are the author's own and not those of any employer, and all commercial figures and examples are illustrative and not claims about any specific product or market.}\\[2pt]
  \normalsize Independent AI Researcher, Z\"urich, Switzerland%
}

\date{June 2026\\[4pt]\normalsize\textit{Preprint --- theory with an empirical test of adaptive forgetting. Comments welcome.}}

\begin{document}
\maketitle

\begin{abstract}
\noindent In a world of generative artificial intelligence, candidate insights are abundant; what is scarce is the capacity to \emph{discern} which of them matter, to act on them in the right \emph{amount and order}, and to \emph{forget} the rest so the system can adapt. We argue that these three scarcities are governed by a single underlying object and assemble a framework around it. First, we define an insight strictly as a lever with an identified, measurable effect on an objective, and rank candidates not by novelty but by decision-relevance, formalised through the expected value of information. Second, we show that action is not only a question of volume but of order: under realistic belief dynamics, content ``touches'' are non-commuting operators, and we prove that a fixed plan delivered in different orders yields different outcomes, defining a \emph{sequence premium}. Third, we observe that the value of any lever is a \emph{shadow price}, which unifies pharmaceutical marketing, equity selection, and manufacturing as instances of one leverage-discovery problem. Finally, and most speculatively, we propose \emph{APOHA}: a theory in which forgetting is not the disposal of knowledge but the operator by which value is learned. We posit that the value of a retained item is the counterfactual cost of forgetting it, that a learning system is the residue of maximal forgetting subject to preserved value, and that ``higher-order value'' is precisely the structure that survives repeated forgetting --- a renormalisation-relevant invariant. We pair forgetting with consolidation as an antagonist flow and state the central open problem (existence of a non-trivial attractor with a spectral gap). We then \emph{test} the forgetting theory: operationalising APOHA as a running agent on a deliberately non-stationary obesity-treatment (GLP-1 / Ozempic) decision world with five regime shifts, and comparing it over 30 seeds against never-forget and the fixed half-life it replaces. Adaptive forgetting cut cumulative decision-regret by 24--32\%, achieved the highest decision-alignment in every regime, and held a $\sim$6$\times$ smaller, cleaner memory; a notable asymmetry emerged --- \emph{blind} (fixed) forgetting was worse than never forgetting on signal-to-noise, so the benefit is specific to \emph{value-aware} forgetting. The flow converged to a stable, bounded memory across all shifts, giving preliminary empirical support for the conjecture without proving it. A structured multi-disciplinary critique stress-tests the whole.

\smallskip
\noindent\textbf{Keywords:} value of information; causal inference; sequence optimisation; renormalisation; information bottleneck; catastrophic forgetting; commercial analytics; adaptive memory.
\end{abstract}

\section{Introduction}
The constraint on intelligent action has moved. For most of the history of analytics the binding constraint was \emph{finding} signal: data were scarce, unstructured evidence was unreadable at scale, and an insight was whatever a skilled analyst could extract from a structured table. Generative models have inverted this. The unstructured world-signal --- field reports, transcripts, policy documents, discourse, intent --- is now cheap to read, and candidate hypotheses can be generated in great number. The scarce capability is no longer recall but \emph{precision}: the ability to discern which of the abundant candidates is worth acting on, to act on it in the right amount and order, and to discard the remainder so the system can keep adapting.

This paper makes five contributions and ties them to one object.

\begin{enumerate}[leftmargin=1.5em,itemsep=2pt]
\item \textbf{Value is decision-relevance, not truth.} An insight earns its name only if it is a lever with an identified, measurable effect on the objective. We rank candidates by the expected value of information --- how much they would change the action otherwise taken --- gated by whether the available lever can address them (\S\ref{sec:value}--\ref{sec:cdu}).
\item \textbf{Action has an order, not just a size.} Under realistic belief dynamics with memory decay and priming, content ``touches'' do not commute. We prove order-dependence and define the resulting \emph{sequence premium} (\S\ref{sec:sequence}).
\item \textbf{Leverage is a shadow price.} The value of a lever is the marginal effect of relaxing the binding constraint. This single object unifies pharmaceutical marketing, equity selection, and manufacturing as one leverage-discovery problem; the ``big-ticket item'' is the highest addressable, measurable shadow price (\S\ref{sec:leverage}).
\item \textbf{Forgetting is how value is learned.} We propose that the value of a retained item is the counterfactual cost of forgetting it; that a designed learning system is the residue of maximal forgetting subject to preserved value; and that higher-order value is the invariant surviving repeated forgetting. Forgetting is paired with consolidation as an antagonist flow (\S\ref{sec:apoha}).
\item \textbf{The forgetting theory is empirically testable, and tested.} We operationalise APOHA as a running agent on a non-stationary GLP-1 / Ozempic decision world and show, against two baselines over 30 seeds, that value-aware adaptive forgetting materially improves decision quality and memory cleanliness, with the flow converging stably rather than collapsing (\S\ref{sec:empirical}).
\end{enumerate}

The unifying object is the \emph{counterfactual effect of an intervention}: applied to the market it is a causal treatment effect; applied to a constraint it is a shadow price; applied to memory it is the value of an item. The same $\dofn(\cdot)$ operator runs throughout. Some results are exact (the sequence premium), one is a proposition we prove, the central learning-theoretic claim is a conjecture and an open problem, and the forgetting theory is additionally subjected to a controlled empirical test (\S\ref{sec:empirical}); the standing objections are aired explicitly in \S\ref{sec:objections}.

\section{Preliminaries: belief dynamics and non-commuting touches}\label{sec:prelim}
We model the effect of communication on a decision-maker's belief over a set of \emph{needs} (themes) $m \in \mathcal{M}$. A \emph{touch} delivers an asset $a$ (with engagement quality $\kappa_a$) carrying a theme $m$ (with salience $g_m$). Belief accumulates with carryover and saturates.

\paragraph{Belief transition.} With memory $\lambda \in [0,1)$ and a priming gate $\pi_m$,
\begin{equation}\label{eq:E1}
z_{m,t} \;=\; \lambda\,z_{m,t-1} \;+\; \kappa_{a_t}\,g_{m_t}\,\pi_{m_t}\!\big(z_{\cdot,t-1}\big)\,\mathbf{1}[m_t=m].
\end{equation}
A self-standing need has $\pi_m=1$; a \emph{dependent} need (e.g.\ comparative efficacy, which only resonates after a mechanism is understood) is gated by a prerequisite stock,
\begin{equation}\label{eq:E2}
\pi_{\mathrm{dep}}(z) \;=\; \frac{z_{\mathrm{pre}}}{z_{\mathrm{pre}}+K},
\end{equation}
with half-saturation $K>0$. Terminal \emph{decision-weighted belief} and the prescribing (objective) response are
\begin{equation}\label{eq:E3E4}
L \;=\; \sum_m w_m\big(1-e^{-\beta z_{m,n}}\big),
\qquad
\Delta R \;=\; N\rho\,r^0\big(1-e^{-\theta L}\big),
\end{equation}
where $w_m$ weights each need's influence on the decision, and $\rho$ caps winnable headroom. The saturating form $1-e^{-x}$ encodes diminishing returns.

\paragraph{Touches as operators.} Write a touch as an operator $T_{a,m}$ acting on the belief state $z=(z_m)_{m\in\mathcal{M}}$ via one step of \eqref{eq:E1}. A delivery schedule is a composition $\sigma = T_{a_n,m_n}\circ\cdots\circ T_{a_1,m_1}$.

\begin{proposition}[Order-dependence of belief]\label{prop:noncommute}
Let $\lambda\in(0,1)$ and let at least one theme be dependent through a non-constant gate \eqref{eq:E2}. Then the touch operators do not commute: there exist touches $T_1,T_2$ with $T_2\circ T_1 \neq T_1\circ T_2$. Consequently the terminal belief $z_{\cdot,n}$, and hence $L$ and $\Delta R$, depend on the order of a fixed multiset of touches, and there exist plans for which $\max_\sigma L - \min_\sigma L > 0$.
\end{proposition}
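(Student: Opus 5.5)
The approach is a direct two-theme computation. Take $\mathcal{M}=\{\mathrm{pre},\mathrm{dep}\}$ (any further themes are left untouched and contribute nothing to the difference), start from $z_{\cdot,0}=0$, and take two touches: $T_1=T_{a,\mathrm{pre}}$ delivering the prerequisite theme with increment $c_1=\kappa_a g_{\mathrm{pre}}>0$, and $T_2=T_{b,\mathrm{dep}}$ delivering the dependent theme with increment $c_2=\kappa_b g_{\mathrm{dep}}>0$. Note that $\pi_{\mathrm{pre}}=1$. By \eqref{eq:E1}, each operator scales every coordinate by $\lambda$ and then adds its increment to its own coordinate. For $T_2$ the increment is $c_2\,\pi_{\mathrm{dep}}(z)$, evaluated at the pre-step state, with $\pi_{\mathrm{dep}}$ given by \eqref{eq:E2}.

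The first step is to compute both compositions from $z=(z_{\mathrm{pre}},z_{\mathrm{dep}})=(0,0)$. For $T_2\circ T_1$, after $T_1$ we have $(c_1,0)$, and then $T_2$ gives $(\lambda c_1,\ c_2\,\frac{c_1}{c_1+K})$. For $T_1\circ T_2$, $T_2$ acts first on the zero state with gate $0/(0+K)=0$, giving $(0,0)$, and then $T_1$ gives $(c_1,0)$. So the two terminal states differ in the dependent coordinate, $c_2 c_1/(c_1+K)>0$ versus $0$. They also differ in the prerequisite coordinate, $\lambda c_1$ versus $c_1$, which is where $\lambda<1$ enters. This already establishes $T_2\circ T_1\neq T_1\circ T_2$.

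The second step transfers the difference to $L$ and $\Delta R$. By \eqref{eq:E3E4},
\[
L(T_2\circ T_1)-L(T_1\circ T_2)=w_{\mathrm{pre}}\big(e^{-\beta c_1}-e^{-\beta\lambda c_1}\big)+w_{\mathrm{dep}}\big(1-e^{-\beta c_2 c_1/(c_1+K)}\big).
\]
The first term is negative and the second positive, so a sign check is needed. This is the one real obstacle: the statement asserts only existence, so parameters may be chosen.

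I would handle it in either of two ways. The simplest is to pad the plan with a third touch, or to choose $\lambda$ close to $1$, so that the first term vanishes while the second stays bounded away from zero. A cleaner alternative is to use the freedom in the multiset: take $T_2$ applied at a state where $z_{\mathrm{pre}}$ is already positive, or compare $T_1,T_1,T_2$ against $T_2,T_1,T_1$, and show the dependent gain dominates. In the original two-touch example, with $w_{\mathrm{dep}}>0$, the right-hand side tends to $w_{\mathrm{dep}}\big(1-e^{-\beta c_2c_1/(c_1+K)}\big)>0$ as $\lambda\to1^-$. By continuity there is therefore some $\lambda\in(0,1)$ for which the difference is strictly positive.

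Given $L$ differing across the two orders, $\Delta R=N\rho r^0(1-e^{-\theta L})$ is strictly increasing in $L$ when $N\rho r^0\theta>0$, so $\Delta R$ differs too. Finally, $\max_\sigma L\ge L(T_2\circ T_1)>L(T_1\circ T_2)\ge\min_\sigma L$ over the orderings of this multiset, which gives the claimed strict gap.
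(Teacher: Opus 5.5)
Your computation of the two compositions is exactly the paper's non-commutativity argument. You are also right that passing to $L$ is not automatic. The paper's sketch says the values differ ``because $L$ is strictly increasing in each $z_{m,n}$'', but $z_{\mathrm{pre}}$ moves the opposite way: $c_1$ versus $\lambda c_1$, or $0.70$ versus $0.42$ in worked example D.2. What actually secures the strict spread in the paper is the exact enumeration of Appendix~\ref{app:enum} at $\lambda=0.6$ with fixed weights.

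Your remedy, however, has a genuine gap. The proposition fixes $\lambda\in(0,1)$, and implicitly $K,\beta,w,g$ and the available assets. Its existential quantifier ranges over touches and plans, not over model parameters. Sending $\lambda\to1^-$ proves the spread only for $\lambda$ above a threshold that depends on $w,\beta,K,c_1,c_2$. At a given $\lambda$ your two-touch plan really can fail: if $w_{\mathrm{pre}}/w_{\mathrm{dep}}$ is such that the negative and positive terms cancel, then $L(T_2\circ T_1)=L(T_1\circ T_2)$. The padding alternatives are asserted rather than carried out. Showing ``the dependent gain dominates'' is both unproven and more than you need, since $\max_\sigma L-\min_\sigma L>0$ only requires $L$ to be non-constant over orderings. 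The obstacle is exact cancellation, not sign.

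To close the gap for every $\lambda\in(0,1)$, choose a plan in which cancellation cannot happen. Suppose the dependent theme has two assets with $\kappa_2\neq\kappa_3$, as in Appendix~\ref{app:enum}. Compare $(T_1,T_2,T_3)$ with $(T_1,T_3,T_2)$. Both end at $z_{\mathrm{pre}}=\lambda^2c_1$. Writing $\pi(x)=x/(x+K)$ and evaluating the gate at $z_{\cdot,t-1}$ as in \eqref{eq:E1}, the dependent stocks differ by
\[
g_{\mathrm{dep}}(\kappa_2-\kappa_3)\bigl(\lambda\pi(c_1)-\pi(\lambda c_1)\bigr)=-g_{\mathrm{dep}}(\kappa_2-\kappa_3)\,\frac{\lambda(1-\lambda)c_1^2}{(c_1+K)(\lambda c_1+K)}\neq0 .
\]
Strict monotonicity of $L$ in $z_{\mathrm{dep}}$ (with $w_{\mathrm{dep}},\beta>0$) then applies legitimately.

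With a single asset per theme, compare ``$n$ prerequisite touches, then the dependent one'' with ``dependent first''. The difference in $L$ is $h(s_n)$, where $s_n=c_1(1-\lambda^n)/(1-\lambda)$ and $h(s)=w_{\mathrm{pre}}(e^{-\beta s}-e^{-\beta\lambda s})+w_{\mathrm{dep}}(1-e^{-\beta c_2 s/(s+K)})$. The function $h$ is real-analytic on $(-K,\infty)$ and tends to $w_{\mathrm{dep}}(1-e^{-\beta c_2})>0$, so it is not identically zero. Its zeros therefore cannot accumulate at $c_1/(1-\lambda)$, and $h(s_n)\neq0$ for all large $n$. After either fix, your $\Delta R$ and $\max/\min$ steps go through unchanged.
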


\begin{proof}[Proof sketch]
Take $T_1$ delivering the prerequisite theme and $T_2$ delivering the dependent theme. In $T_2\circ T_1$ the gate evaluated at the dependent step sees the prerequisite stock built by $T_1$, so $\pi_{\mathrm{dep}}>0$ and the dependent belief increases. In $T_1\circ T_2$ the dependent step fires first with $z_{\mathrm{pre}}=0$, so $\pi_{\mathrm{dep}}=0$ and that touch contributes nothing to the dependent theme. The two terminal states therefore differ, establishing non-commutativity; because $L$ is a strictly increasing function of each $z_{m,n}$, the induced values differ. A worked four-touch instance with exact enumeration appears in Appendix~\ref{app:enum}, exhibiting a strictly positive spread.
\end{proof}

Proposition~\ref{prop:noncommute} is the formal basis for the sequence premium (\S\ref{sec:sequence}). The schedule is a composition in a non-commutative monoid of belief transformations; recency (decay favours late touches) and priming (gates favour early prerequisites) pull in opposite directions, so the optimal order is a balance, not an intuition.

\section{Decision-relevant value}\label{sec:value}
We reject the ``interesting finding'': a correlation that changes no action. An object earns the term \emph{insight} only if it is simultaneously scalable (machine-produced over the whole evidence base), discoverable (indexed and addressable), and measurable (a causal estimand with a defined test).

\paragraph{Expected value of information.} Following Howard's information value theory \cite{howard1966}, the value of a candidate signal $s$ over an uncertain state $\theta$ and action set $a$ is the gain from choosing after observing it:
\begin{equation}\label{eq:evoi}
\mathrm{EVoI}(s) \;=\; \Eb{\max_a U(a;\theta)\,\big|\,s} \;-\; \max_a \Eb{U(a;\theta)} \;\ge\; 0 .
\end{equation}
The inequality is the point: if $s$ never changes the maximising action, the two terms coincide and the information is worthless. Value is decision-change, not truth.

\paragraph{The discernment filter.} EVoI alone is insufficient: a signal may be highly valuable yet unaddressable by the lever one holds. We rank by a composite that gates value on addressability and confidence, normalised by activation cost:
\begin{equation}\label{eq:score}
\mathrm{Score}(s) \;=\; \frac{\mathrm{EVoI}(s)\cdot \mathrm{Addressability}(s)\cdot \mathrm{Confidence}(s)}{\mathrm{ActivationCost}(s)} .
\end{equation}
A high-EVoI but non-addressable signal is routed elsewhere rather than forced into the wrong instrument (\S\ref{sec:example}). We name this discernment stage \emph{VIVEKA}, after the Sanskrit term for the faculty that distinguishes the essential from the inessential.

\section{Causal admission}\label{sec:cdu}
A high-scoring signal is still a hypothesis. To enter the decision it must be promoted to a \emph{Causal Decision Unit} (CDU): a lever $x_s$, an outcome $Y$, and an identified effect, in the sense of Pearl's interventional calculus \cite{pearl2009}:
\begin{equation}\label{eq:cdu}
\tau_s \;=\; \Eb{Y\mid \dofn(x_s{=}1)} \;-\; \Eb{Y\mid \dofn(x_s{=}0)} ,
\end{equation}
identified iff a covariate set blocks all back-door paths $x_s \rightsquigarrow Y$; a randomised holdout makes this trivial. This is the gate a generative model cannot pass on its own --- a pattern-matcher proposes; only an identification argument admits. Placing causal admission \emph{after} value-ranking is an economic decision: identification is the most expensive step and is spent only on candidates that have already cleared \eqref{eq:score}.

\section{Volume and sequence}\label{sec:sequence}
A CDU specifies what to move and in whom. The decision layer specifies how much and in what order. \emph{Volume} is the minimum-cost plan achieving the target lift,
\begin{equation}\label{eq:vol}
\min_{x\ge 0}\ \sum_a c_a x_a \quad\text{s.t.}\quad \Delta R(x) \ge \tau R_0 ,
\end{equation}
subject to capacity, frequency, and feasibility constraints. \emph{Sequence} is the order $\sigma$, which by Proposition~\ref{prop:noncommute} matters even at fixed cost. Define the optimal schedule and the \emph{sequence premium}
\begin{equation}\label{eq:seq}
\sigma^{*}=\arg\max_{\sigma}\ \sum_m w_m\big(1-e^{-\beta z_{m,n}(\sigma)}\big),
\qquad
\Pi_{\mathrm{seq}}=\frac{\mathcal{V}(\sigma^{*})-\mathcal{V}(\sigma_{\mathrm{na\ddot{\imath}ve}})}{\lvert\mathcal{V}(\sigma_{\mathrm{na\ddot{\imath}ve}})\rvert}.
\end{equation}
The premium is value created by ordering alone. At scale, enumeration over $|\mathcal{P}|!$ orderings is replaced by a sequential policy $V(S)=\max_{a,m}[\,R(S,a,m)-c_a+\gamma\,\mathbb{E}V(S')\,]$, a Markov decision process whose state is the belief stock \cite{suttonbarto}. Appendix~\ref{app:enum} reports an exact instance in which the premium is $+45.8\%$ of decision-weighted belief, enough to move a campaign from missing to clearing its target at identical spend.

\section{Leverage is a shadow price}\label{sec:leverage}
The preceding sections are a special case of one question: where does a unit of effort move the objective most, in a way one can act on and measure? The dual of the volume programme \eqref{eq:vol} assigns to the binding constraint a multiplier $\eta$ --- its \emph{shadow price} --- equal to the marginal objective gained per unit of relaxation. At the optimum,
\begin{equation}\label{eq:kkt}
\frac{1}{c_a}\,\frac{\partial(\text{objective})}{\partial x_a} \;=\; \eta \quad\text{for every funded lever:}\quad \text{leverage per unit cost is equalised.}
\end{equation}

\begin{principle}[Big-ticket item]\label{prin:bigticket}
The highest-value action is not the largest or most salient one; it is the lever with the greatest \emph{addressable, measurable} shadow price.
\end{principle}

This principle is domain-general. In equities, $\eta$ is the marginal risk-adjusted return per unit of capital; the value of an edge is its expected information gain on allocation, gated by deployable capacity (addressability) and out-of-sample reliability (confidence), and position size is the Kelly fraction \cite{kelly1956}. In manufacturing, $\eta$ is the throughput gained per unit of bottleneck capacity; since a serial line produces at the rate of its slowest stage, all leverage sits at the constraint, and relieving it moves the constraint elsewhere --- Goldratt's theory of constraints \cite{goldratt1984}. Table~\ref{tab:domains} maps the pipeline across the three domains; in each, the EVoI--addressability filter, the causal/identification gate, the volume and sequence decision, and the difference-in-differences readout recur unchanged.

\begin{table}[t]\centering\small
\renewcommand{\arraystretch}{1.25}
\begin{tabularx}{\textwidth}{@{}lXXX@{}}
\toprule
& \textbf{Pharma commercial} & \textbf{Equities} & \textbf{Manufacturing}\\
\midrule
Objective & prescriptions & return per unit risk & units released / hour\\
Value \eqref{eq:score} & EVoI $\times$ addressability & edge $\times$ capacity & throughput gain $\times$ feasibility\\
Causalize \eqref{eq:cdu} & randomised HCP holdout & out-of-sample / regime test & controlled trial / simulation\\
Volume \eqref{eq:vol} & content volume & position size (Kelly) & capacity added\\
Sequence \eqref{eq:seq} & message order & scale-in / price impact & debottlenecking order\\
Measure & DiD vs.\ holdout & alpha vs.\ benchmark & OEE vs.\ comparable line\\
\textbf{Leverage} $\eta$ & lift per unit spend & Sharpe per unit capital & throughput per unit capacity\\
\bottomrule
\end{tabularx}
\caption{One leverage-discovery pipeline, three domains. The big-ticket item is, in each, the highest addressable, measurable shadow price.}
\label{tab:domains}
\end{table}

\section{APOHA: forgetting as the operator of value}\label{sec:apoha}
Everything above assumes a fixed notion of what is valuable. A deployed system, however, must \emph{learn} and \emph{revise} that notion as the world moves --- and must shed what no longer serves it. We now argue that forgetting is not the disposal that follows learning but the operation by which value is learned. The name is taken from Dign\=aga's \emph{apoha} \cite{apoha}: a concept is constituted not positively but by exclusion --- ``cow'' means \emph{exclusion-of-non-cow}. Carried into a learning system: a high-value representation is carved out by what it excludes, as a sculpture is what remains once the non-sculpture is removed.

\subsection{The central inversion}
\begin{principle}[Value is the cost of forgetting]\label{prin:cost}
One cannot know what is high-value except by removing it and observing what breaks. The value of a retained item $i$ in a memory $M$ supporting objective $J$ is the counterfactual damage of dropping it,
\begin{equation}\label{eq:valforget}
\val(i)\;=\;J(M)-J(M\setminus i)\;\approx\;\text{effect of }\dofn(\text{forget } i)\text{ on } J .
\end{equation}
\end{principle}

Equation~\eqref{eq:valforget} is the interventional operator \eqref{eq:cdu} turned inward, onto memory: $\val(i)$ is a leave-one-out / influence quantity, and its set-coalitional refinement is a Shapley value \cite{shapley1953}. The consequence reframes the field: forgetting is not the enemy of value but the \emph{instrument that measures} it. A system that never forgets can never discover what matters, because it never runs the experiment.

\subsection{Axioms}
\begin{axiom}[Definition by exclusion]\label{ax:apoha}
The retained set is defined by its negative space; value lives on the boundary between kept and lost.
\end{axiom}
\begin{axiom}[Counterfactual value]\label{ax:value}
$\val(i)=J(M)-J(M\setminus i)$. High value is equivalent to being expensive to lose, and is estimable by sampled counterfactual forgets, so the system discovers its own value function rather than being handed one.
\end{axiom}
\begin{axiom}[Plasticity requires oblivion]\label{ax:plast}
A non-forgetting system is frozen at a stale value function; it can refine a local optimum but never relocate to a higher-value one. Forgetting is the exploration term in value-space and a regulariser against overfitting to an outdated notion of importance.
\end{axiom}

\subsection{The machinery}
\paragraph{(F1) The system is a minimal sufficient memory.} Forgetting is maximised subject to preserved value --- an information bottleneck \cite{tishby1999}:
\begin{equation}\label{eq:ib}
M^{*}=\arg\min_{M}\ \big[\,\II(M;X)-\beta\,\II(M;Y)\,\big],
\end{equation}
compressing the raw past $X$ as hard as possible while retaining predictive power about the objective $Y$. The fixed point \emph{is} the system: its structure is the residue of maximal compression that still serves value. The system is, literally, designed by forgetting.

\paragraph{(F2) Adaptive forgetting.} Replace a constant decay (a fixed half-life) with a state-dependent hazard,
\begin{equation}\label{eq:hazard}
\text{forget-rate}(i)\;=\;\sigma\!\big(\alpha\,\mathrm{Redundancy}(i)-\gamma\,\val(i)\big),
\end{equation}
forgetting quickly what is redundant (reconstructable from the survivors) or low-value, and protecting what is irreducible and costly to lose. Two items of equal age acquire different lifespans.

\paragraph{(F3) Surprise-gated meta-plasticity.} The retention price $\beta$ tracks how wrong the current value function is, $\beta_t \propto |\delta\mathcal{V}_t|$, the value-prediction error. When the world shifts and the system's sense of importance is suddenly wrong, plasticity spikes (forget faster, re-learn); when stable, it consolidates. This is the dial that lets the system re-learn what counts as high value rather than clinging to the old, echoing surprise-modulated plasticity in cortex \cite{tononi2014}.

\paragraph{(F4) Higher-order value as a renormalisation flow.} Define a map $R$ that (i) coarse-grains memory by adaptive forgetting \eqref{eq:hazard}, then (ii) re-estimates value on the survivors \eqref{eq:valforget}, and iterate it:
\begin{equation}\label{eq:rg}
\mathcal{V}_{k+1}\;=\;R[\mathcal{V}_k].
\end{equation}
Under repeated forgetting, relevant structures grow and irrelevant ones vanish, exactly as in Wilson's renormalisation group \cite{wilson1975}. We take this as the \emph{definition} of higher-order value.
\begin{definition}[Higher value]
A structure is of higher-order value iff it is a relevant invariant of the forgetting flow \eqref{eq:rg} --- it survives repeated coarse-graining. Noise is, dually, exactly what coarse-graining destroys.
\end{definition}

\paragraph{(F5) The forget--consolidate adjunction.} Forgetting is a forgetful functor $U$; reconstruction is a candidate left adjoint $F$. \emph{Lossless} forgetting discards only the $F\!\circ\!U$-recoverable (redundant) part. The substantive case is where $F$ does not exist: that irreversible loss is where genuine abstraction occurs, since one must lose detail to generalise. We therefore distinguish \emph{lossless forgetting} (compression, which teaches nothing new) from \emph{lossy forgetting} (the non-invertible quotient, where higher-order structure is created). Crucially, forgetting must be paired with a consolidation operator $C$ acting on the relevant survivors; without it, a transient surprise spike (F3) induces catastrophic forgetting of hard-won structure \cite{mccloskey1989,kirkpatrick2017}.

\paragraph{(F6) Measurability.} Three readouts make the flow observable: (i) retained signal-to-noise, which a healthy forget should raise; (ii) value-regret, the objective lost to an erroneous forget; and (iii) the \emph{relevance spectrum}, the eigenvalues of the linearised flow $R$, whose spectral gap separates relevant (eigenvalue $>1$) from irrelevant ($<1$) directions. A clean gap certifies that the system is discarding noise and keeping signal.

\subsection{The central open problem}
The circularity in F1 --- compressing toward a $Y$ whose value is simultaneously being learned (Axiom~\ref{ax:value}) --- is the engine of the framework, but it demands a guarantee that the flow neither oscillates nor drifts to total erasure (the trivial ``forget everything'' fixed point).

\begin{conjecture}[APOHA attractor]\label{conj:attractor}
Let $C\circ R$ be the consolidate-after-forget-and-revalue map on the space of value functions with metric $d$, under bounded plasticity ($\beta_t \le \bar\beta$) and an immutable retention floor. If $C\circ R$ is a contraction in $d$, then the flow $\mathcal{V}_{k+1}=(C\circ R)[\mathcal{V}_k]$ has a unique non-trivial attractor $\mathcal{V}_\infty$ exhibiting a strictly positive spectral gap, and higher-order value is the projection onto its relevant subspace. The designed system is $\mathcal{V}_\infty$.
\end{conjecture}

Establishing the contraction condition --- equivalently, identifying the conservation law (consolidation) that prevents the flow from bleeding to zero while permitting irreversible abstraction --- is, in our view, the core theoretical obligation this framework incurs. We state it openly rather than assert it. Section~\ref{sec:empirical} supplies preliminary empirical support: across five regime shifts the operationalised flow converges to a stable, bounded, high-signal memory rather than collapsing to the trivial fixed point or oscillating --- behaviour consistent with, though not a proof of, Conjecture~\ref{conj:attractor}.

\subsection{Placement}
APOHA replaces a fixed memory half-life with a learned, causal, adaptive operator; it reuses the interventional machinery of \eqref{eq:cdu} on memory; its relevance spectrum is a spectral-health diagnostic for regulated agentic systems; and its lossless special case coincides with lossless information transport. It is the missing account of \emph{why} and \emph{what} to forget that sits beneath replay- and consolidation-based memory architectures.

\section{A worked example}\label{sec:example}
We ground the discernment and sequence layers in a current pharmaceutical setting: the anti-CD20 class in multiple sclerosis, where the high-efficacy standard is established and competition has shifted toward delivery convenience and a new mechanism wave. The relevant unmet need among switch-considering neurologists is therefore long-term safety confidence, not efficacy. Generative extraction over field debriefs, congress transcripts, payer text, and peer discourse yields four candidate insights, scored by \eqref{eq:score}; see Table~\ref{tab:disc}.

\begin{table}[t]\centering\small
\renewcommand{\arraystretch}{1.25}
\begin{tabularx}{\textwidth}{@{}Xccccl@{}}
\toprule
\textbf{Candidate insight} & EVoI & Addr & Conf & Cost & Verdict\\
\midrule
Long-term safety confidence is the binding objection & 2.0 & 0.80 & 0.70 & 1.0 & promote (score 1.12)\\
Subcutaneous option neutralises convenience objection & 1.4 & 0.85 & 0.65 & 0.8 & promote (0.97)\\
New-mechanism ``wait-and-see'' hesitation & 1.8 & 0.35 & 0.55 & 1.3 & route to medical (0.27)\\
Biosimilar-era value reassurance & 1.6 & 0.30 & 0.50 & 1.4 & route to access (0.17)\\
\bottomrule
\end{tabularx}
\caption{The discernment filter. The third and fourth signals carry the second- and third-highest raw value yet the two lowest scores: high EVoI, low addressability-by-content. The filter routes them out rather than forcing them into a campaign.}
\label{tab:disc}
\end{table}

The winning insight becomes a CDU: deliver a safety/mechanism priming sequence \emph{before} the (accepted) efficacy message, identified by a randomised holdout. Holding the four-touch plan constant in assets and spend, enumeration over orderings \eqref{eq:seq} yields a na\"ive ``lead-with-efficacy'' order scoring $L=0.176$ (an $8.9\%$ implied lift, missing a $10\%$ target) versus an optimal ``prime-then-decide'' order scoring $L=0.256$ ($11.8\%$, clearing it): a sequence premium of $+45.8\%$ at identical cost (Appendix~\ref{app:enum}). Readout is by difference-in-differences on a randomised, volume-matched holdout, accepted only if the lift clears the target and its confidence interval excludes zero.

\section{Empirical study: APOHA on a non-stationary Ozempic decision world}\label{sec:empirical}
Because the forgetting theory of \S\ref{sec:apoha} is the most speculative contribution, we subject it to a controlled empirical test. We operationalise APOHA as a running reasoning agent and ask whether value-aware adaptive forgetting actually improves decision quality under non-stationarity, relative to the two natural alternatives: never forgetting, and the fixed exponential half-life (a constant per-item decay) that APOHA is meant to replace.

\subsection{Hypotheses (pre-registered)}
\textbf{H1 (alignment):} adaptive forgetting yields higher decision-alignment than both baselines, with the largest margin just after a regime shift. \textbf{H2 (efficiency/cleanliness):} APOHA holds a smaller memory with higher retained signal-to-noise (more belief-mass on currently-relevant topics). \textbf{H3 (adaptation speed):} APOHA recovers faster after a shift than \emph{both} baselines. \textbf{H4 (value $=$ cost of forgetting):} beliefs rendered stale by a shift are preferentially and promptly forgotten, so per-topic staleness stays low. \textbf{H5 (no catastrophic forgetting):} surprise-driven plasticity does not erase persistently-true structure. The pre-registered \emph{value gate} was H1 $\wedge$ H2 $\wedge$ H4 without violating H5; H3 was treated as desirable but not decisive.

\subsection{Environment and protocol}
The world is ten obesity-treatment decision topics (weight-loss efficacy, cardiovascular-outcome benefit, GI tolerability, injectable barrier, supply/access, weight regain off-treatment, muscle preservation, competitor superiority, oral GLP-1 option, new indications) whose true relevance drifts slowly and jumps at five scheduled regime shifts chosen to mirror real GLP-1 market events (Appendix~\ref{app:exp}, Table~\ref{tab:shifts}). Each \emph{interaction} is one full cognitive cycle: sense a noisy signal about one topic, store it as a belief (a memory of an estimate at a time), revalue, forget/consolidate, and decide an attention allocation; the world returns a noisy payoff. A strict information barrier separates agent and evaluator: the agent forgets using only signals it could really observe --- contradiction by new evidence, redundancy, and its own relevance estimate --- while ground truth is used \emph{only} by the evaluator to score alignment. We ran 30 random seeds $\times$ 100 interactions; the three policies share identical sensing and decision code and differ only in the forgetting rule, so any difference is attributable to memory management alone.

\subsection{The adaptive forgetting rule (operationalising F2--F5)}
Each belief carries a topic, a stored estimate, a corroboration time, a consolidation score, and a contradiction signal (an exponential average of disagreement with fresh observations). Its forgetting hazard is
\begin{equation}\label{eq:hazardimpl}
\begin{split}
\mathrm{hazard}(i)=\sigma\big(&1.6\,\mathrm{contradiction}_i+0.9\,\mathrm{redundancy}_i+0.04\,\mathrm{age}_i\\
&{}-1.8\,\mathrm{relevance}_{\tau(i)}-1.5\,\mathrm{consol}_i\big)\times(0.5+1.6\,\beta_t),
\end{split}
\end{equation}
forgetting what is contradicted (stale) or redundant (reconstructable from corroborated siblings), protecting what is currently relevant or consolidated, with the outer factor the surprise-gated plasticity $\beta_t$ (F3): an EMA of prediction error that spikes when the world shifts. \emph{Never-forget} sets the hazard to zero; \emph{fixed half-life} uses a constant, value-blind hazard ($\approx$18-interaction half-life).

\subsection{Results}
\begin{figure}[t]\centering
\includegraphics[width=\textwidth]{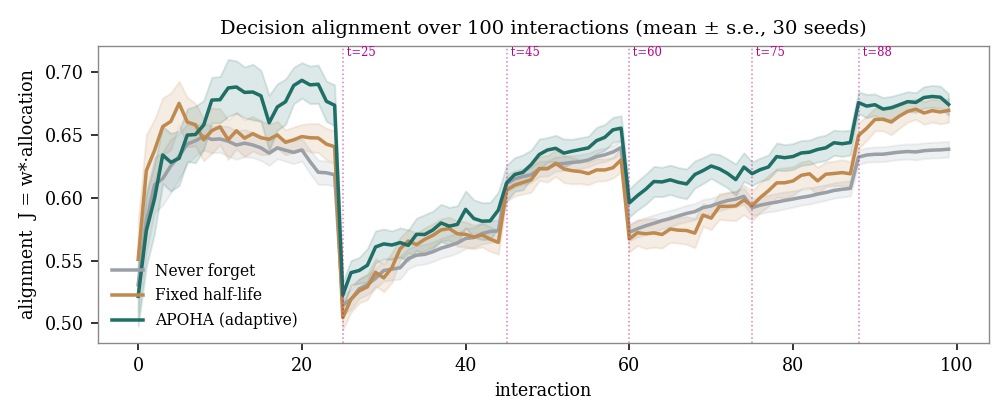}
\caption{Decision alignment across 100 interactions (mean $\pm$ s.e., 30 seeds); dotted lines mark regime shifts. APOHA (teal) leads in every regime and recovers fastest after each break.}
\label{fig:align}
\end{figure}

Table~\ref{tab:headline} gives the headline numbers. APOHA reduced cumulative regret by 24\% versus fixed-decay and 32\% versus never-forget, won decision-alignment in every regime (stable and post-shift), and held the smallest, cleanest memory.

\begin{table}[h]\centering\small\renewcommand{\arraystretch}{1.2}
\begin{tabularx}{\textwidth}{@{}lcccccc@{}}
\toprule
\textbf{Policy} & Align (all) & Align (post-shift) & Align (stable) & Cum.\ regret & Mem size & SNR\\
\midrule
Never forget & 0.6036 & 0.5949 & 0.6124 & 7.98 & 57.7 & 0.666\\
Fixed half-life & 0.6115 & 0.6008 & 0.6222 & 7.19 & 22.8 & 0.614\\
\textbf{APOHA} & \textbf{0.6290} & \textbf{0.6201} & \textbf{0.6378} & \textbf{5.44} & \textbf{11.0} & \textbf{0.734}\\
\bottomrule
\end{tabularx}
\caption{Summary over 30 seeds $\times$ 100 interactions. Higher alignment and SNR are better; lower regret and memory are better.}
\label{tab:headline}
\end{table}

\begin{figure}[h]\centering
\begin{minipage}{0.49\textwidth}\includegraphics[width=\textwidth]{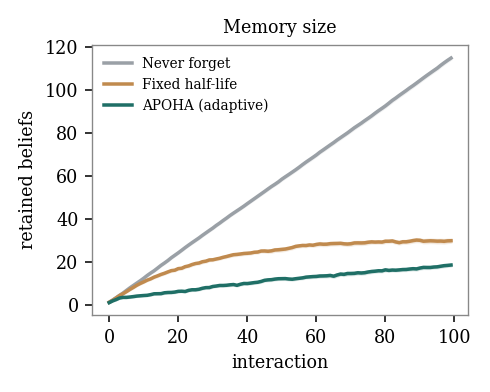}\end{minipage}\hfill
\begin{minipage}{0.49\textwidth}\includegraphics[width=\textwidth]{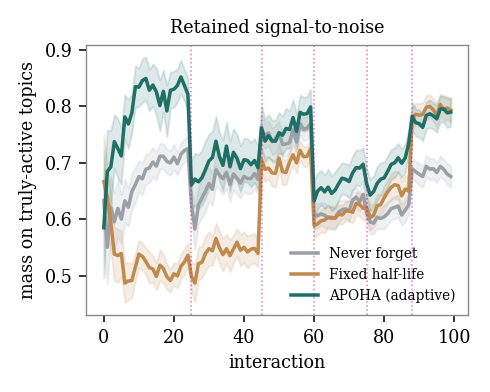}\end{minipage}
\caption{Left: memory size --- never-forget balloons toward $\sim$96 beliefs by the final third while APOHA stays near 11. Right: retained signal-to-noise --- adaptive forgetting (teal) is cleanest, and fixed-decay (amber) is \emph{worse} than never-forget, the asymmetry discussed below.}
\label{fig:memsnr}
\end{figure}

We evaluate the hypotheses. \textbf{H1 (supported):} APOHA leads on alignment overall, post-shift, and even in stable stretches (Fig.~\ref{fig:align}), the last because a cleaner memory dilutes the decision less. \textbf{H2 (supported):} $\sim$11 retained beliefs versus 23 and 58, at the highest SNR. \textbf{H3 (partially supported):} APOHA's mean recovery lag (6.38 interactions) clearly beats never-forget (8.00) but only ties fixed-decay (6.31); adaptive forgetting matches blind forgetting on raw \emph{speed}, and earns its advantage in \emph{what} it keeps, not how fast it reacts --- reported as a genuine non-dominance. \textbf{H4 (supported):} Fig.~\ref{fig:h4}; after a topic's truth shifts, APOHA forgets the now-stale beliefs and rebuilds, driving per-topic staleness back down, while never-forget lets staleness persist with up to 13 stale beliefs accumulated. \textbf{H5 (supported):} despite surprise spikes at every shift (Fig.~\ref{fig:internals}), alignment never collapses and stable-regime alignment is the highest of the three, indicating consolidation preserved persistently-true structure.

\begin{figure}[t]\centering
\includegraphics[width=\textwidth]{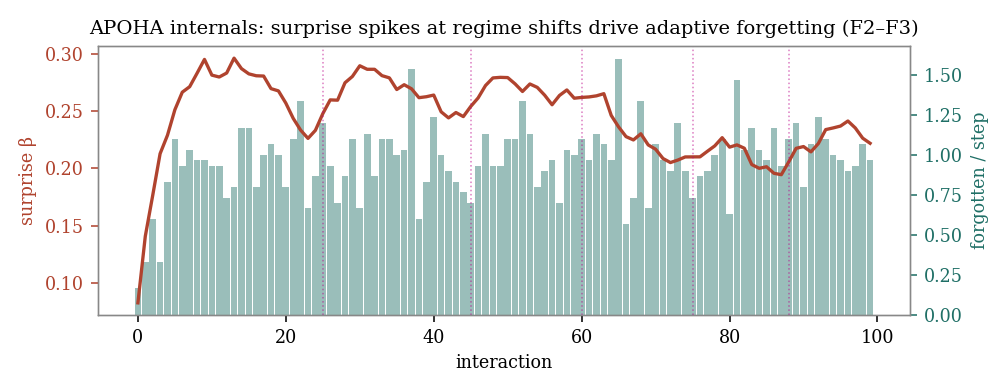}
\caption{APOHA internals. Surprise $\beta$ (red) spikes at each regime shift and drives a burst of forgetting (teal bars), then subsides into consolidation --- the surprise-gated plasticity of F3 operating as specified.}
\label{fig:internals}
\end{figure}

\begin{figure}[t]\centering
\includegraphics[width=\textwidth]{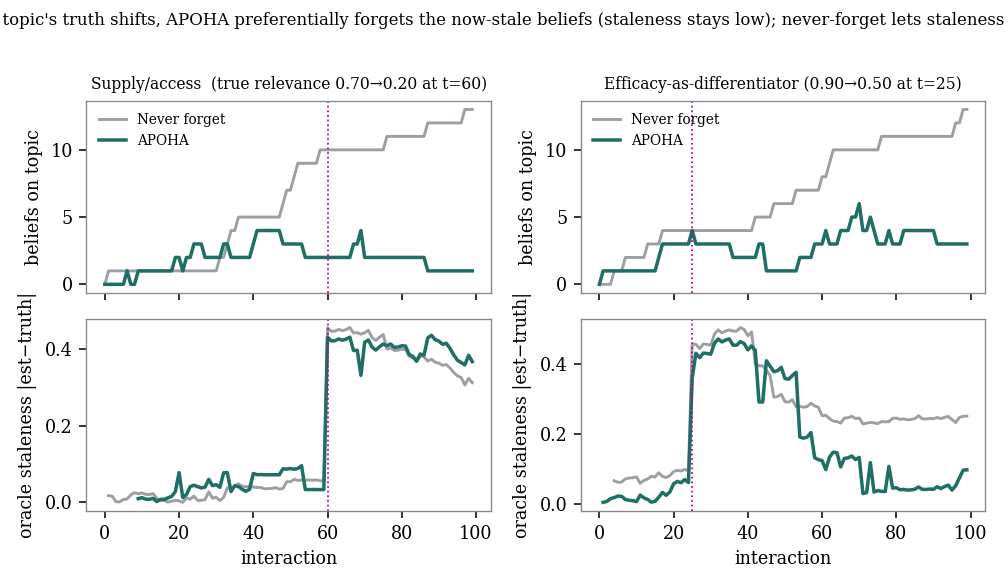}
\caption{H4 evidence. When a topic's truth shifts (dotted line), never-forget (grey) accumulates stale beliefs and lets staleness persist; APOHA (teal) forgets the now-wrong beliefs and rebuilds, driving staleness back down. Value is, operationally, the cost of forgetting.}
\label{fig:h4}
\end{figure}

\paragraph{The sharpest finding.} Fixed-decay's SNR (0.614) is \emph{lower} than never-forget's (0.666): blind forgetting discards good and bad beliefs alike and can be worse than not forgetting at all. Only \emph{value-aware} forgetting raises SNR. This asymmetry (Fig.~\ref{fig:memsnr}, right) shows the benefit is specific to the \emph{adaptive} character of \eqref{eq:hazardimpl}, not to forgetting per se --- a result that materially strengthens, and refines, the claim of \S\ref{sec:apoha}: the operator that matters is value-gated forgetting, and a naive half-life (the object APOHA replaces) is not merely weaker but can be actively harmful.

\paragraph{Higher-order learning.} Within each regime all policies' alignment climbs as beliefs accrue, then drops at a shift and re-climbs. The diagnostic of \emph{higher-order} learning is that APOHA's stable-regime alignment exceeds the others while holding far fewer beliefs: it has learned \emph{what to keep}, not merely accumulated. The flow's convergence to a small, stable, high-SNR memory across all five shifts --- never collapsing to the trivial fixed point nor oscillating --- is the empirical behaviour Conjecture~\ref{conj:attractor} predicts, and is our principal (non-proving) evidence for it.
\section{Objections and responses}\label{sec:objections}
Because the forgetting theory of \S\ref{sec:apoha} is the most speculative contribution, we subject it to a structured critique from five disciplinary vantage points and record our responses; the disagreements sharpen rather than refute the framework.

\paragraph{Information theory.} \emph{Objection:} F1 hides a circularity --- one compresses toward $Y$ whose value is itself being learned, so the target moves as one aims. \emph{Response:} the circularity is the engine, not a defect, but it obliges the contraction guarantee of Conjecture~\ref{conj:attractor}; we therefore promote that guarantee to the paper's central open problem rather than assuming convergence. Empirically (\S\ref{sec:empirical}) the joint forget/value loop converged stably across all five shifts rather than diverging, which is encouraging but not a proof.

\paragraph{Neuroscience.} \emph{Objection:} forgetting without its conjugate, replay/consolidation, produces catastrophic forgetting under a surprise spike. \emph{Response:} accepted and incorporated --- F5 makes consolidation $C$ a mandatory partner of forgetting; APOHA is an antagonist pair, not forgetting alone. This is also the conservation law the dynamical objection demands, and \S\ref{sec:empirical} confirms it operationally (H5): despite surprise spikes at every shift, stable-regime alignment remained the highest of the three policies, indicating persistently-true structure was preserved.

\paragraph{Dynamical systems.} \emph{Objection:} renormalisation flows admit a trivial ``forget everything'' fixed point and may drift to oblivion; an attractor is asserted, not shown. \emph{Response:} consolidation supplies the conservation term that excludes the trivial fixed point; the spectral gap (F6) is the diagnostic that the flow is not collapsing. Demonstrating a strictly positive gap is precisely Conjecture~\ref{conj:attractor}. Empirically (\S\ref{sec:empirical}), memory neither bled to zero nor exploded: it stabilised near eleven beliefs concentrated on the truly-active topics across all five shifts.

\paragraph{Category theory.} \emph{Objection:} treating lossless forgetting as the ideal is backwards; the valuable forgetting is the non-invertible part, since abstraction is the irreversible quotient. \emph{Response:} accepted and adopted --- F5 demotes the lossless case to mere compression and identifies lossy, non-adjoint forgetting as where higher-order value is created; the relevance spectrum measures which regime the system is in.

\paragraph{Governance.} \emph{Objection:} adaptive forgetting is an attack surface --- an adversary who makes harmful or mandated records appear redundant induces their loss. \emph{Response:} adaptive forgetting must operate only above an immutable compliance floor, every forget must be logged in an append-only audit trail, and the value-probe of Axiom~\ref{ax:value} must itself be adversarially audited. We treat these as deployment preconditions, not optional hardening.

The corrected thesis: \emph{higher-order learning is an antagonist flow of irreversible forgetting and consolidation in which value is the cost of forgetting, higher value is what survives repeated forgetting, and the system's final form is the attractor this flow carves out --- bounded below by an immutable floor.}

\section{Limitations and governance}\label{sec:limits}
The sequence premium is exact only for the small instance enumerated; at scale it depends on the fidelity of the learned belief dynamics and on solving the associated Markov decision process. The discernment scores and the value-to-objective conversion are illustrative pending field calibration. The empirical study of \S\ref{sec:empirical}, while pre-registered and run over 30 seeds, is \emph{synthetic}: its magnitudes depend on the chosen regime-shift schedule and noise, and its agent approximates value from contradiction and redundancy rather than from realised, outcome-attributed payoff --- closing that loop is the natural next step. The study also revealed a genuine non-dominance (H3): adaptive forgetting does not beat blind forgetting on raw adaptation \emph{speed}, only on what it retains. And Conjecture~\ref{conj:attractor} remains unproven: the observed stable convergence is supporting evidence, not a theorem. In regulated settings, adaptive forgetting requires an immutable retention floor, auditability of every forget, and adversarial testing of the value-probe, as argued in \S\ref{sec:objections}.

\section{Conclusion}\label{sec:conclusion}
We have argued that discernment, sequencing, leverage, and forgetting are facets of one object --- the counterfactual effect of an intervention --- and assembled them into a single calculus. Value is decision-relevance, established causally; action carries an order as well as a size; leverage is a shadow price that unifies disparate domains; and, most speculatively, forgetting is the operator by which value is learned, with higher-order value defined as the invariant that survives repeated forgetting. We moved this last claim from assertion toward evidence: in a controlled non-stationary test, value-aware adaptive forgetting cut decision-regret by roughly a quarter to a third against both never-forgetting and a fixed half-life, kept a markedly cleaner and smaller memory, and converged stably rather than collapsing --- while a blind half-life proved worse than no forgetting at all, showing the benefit is specific to \emph{value-aware} forgetting. The framework's empirical layers are therefore not merely testable but tested; its theoretical core --- the existence of a non-trivial attractor for the forget--consolidate flow --- remains an open problem for which we now offer supporting evidence. In a world where candidate insights are abundant, the scarce and decisive capability is the disciplined, value-aware removal of almost all of them.

\appendix
\section{Notation}\label{app:notation}
$z_{m,t}$ belief on need $m$ at time $t$; $\lambda$ memory; $\kappa_a$ asset quality; $g_m$ salience; $\pi_m$ priming gate; $w_m$ decision weight; $L$ decision-weighted belief; $\sigma$ schedule; $\rho,\theta,\beta$ headroom, conversion, persuasion; $\mathrm{EVoI}$ value of information; $\tau_s$ causal effect; $\eta$ shadow price; $J,M,X,Y$ objective, memory, raw past, target; $R,C$ forget-revalue and consolidation operators; $\mathcal{V}$ value function.

\section{Exact sequence-premium instance}\label{app:enum}
One persona, four touch-cards over four periods: $T_1$ = (mechanism, $\kappa{=}1.0$), $T_2,T_3$ = (efficacy, $\kappa{=}0.8,0.5$, gated), $T_4$ = (convenience, $\kappa{=}0.8$). Parameters $\lambda{=}0.6$, $\beta{=}1.0$, $K{=}0.5$; weights $w=(0.2,0.5,0.3)$; salience $g=(0.7,0.9,0.6)$; conversion $\rho{=}0.25,\theta{=}2.5$. Exhaustive evaluation of all $24$ orderings under \eqref{eq:E1}--\eqref{eq:E3E4} gives a worst order $L=0.130$, a message-priority na\"ive order $L=0.176$ (implied lift $8.9\%$), and an optimal order $L=0.256$ (implied lift $11.8\%$). The sequence premium is $\Pi_{\mathrm{seq}}=(0.256-0.176)/0.176=+45.8\%$ at identical cost; the optimal order defers the mechanism touch so its priming peaks immediately before the decisive efficacy touch and lands efficacy last for recency, confirming Proposition~\ref{prop:noncommute} with a strictly positive spread.

\section{Experimental reproducibility}\label{app:exp}
The world is ten topics whose initial true relevance vector is
\[
w_0=(0.90,\,0.30,\,0.60,\,0.50,\,0.70,\,0.30,\,0.20,\,0.30,\,0.10,\,0.20),
\]
with slow Gaussian drift ($\sigma{=}0.01$/step), observation noise $\sigma{=}0.10$, spurious-belief probability $0.15$, soft-allocation temperature $4$, and the five scheduled regime shifts of Table~\ref{tab:shifts}. The agent state is a list of belief items $\langle$topic, estimate, birth, last-corroboration, consolidation, contradiction-EMA$\rangle$; aggregation is consolidation-weighted; the forgetting hazard is \eqref{eq:hazardimpl}; consolidation increments (capped at $3$) for corroborated beliefs on currently-relevant topics; surprise $\beta_t$ is an EMA of prediction error. Baselines: never-forget (hazard $0$); fixed half-life (constant hazard, $\approx$18-interaction half-life). Alignment is $J=w^\top p$ where $p$ is the agent's allocation and $w$ the (evaluator-only) ground truth; cumulative regret sums $J_{\max}-J$. Results are means over 30 seeds $\times$ 100 interactions and are deterministic per seed. A reference implementation and the figure-generating code accompany this preprint.

\begin{table}[h]\centering\small\renewcommand{\arraystretch}{1.15}
\begin{tabularx}{\textwidth}{@{}c X l@{}}
\toprule
\textbf{Interaction} & \textbf{Regime shift (real-world analogue)} & \textbf{True-relevance change}\\
\midrule
25 & Head-to-head competitor superiority (tirzepatide) & competitor $0.30\!\to\!0.85$; efficacy-as-differentiator $0.90\!\to\!0.50$\\
45 & Cardiovascular-outcomes landmark (SELECT-type) & CV benefit $0.30\!\to\!0.85$; durability/regain $0.30\!\to\!0.60$\\
60 & Supply normalises & supply/access $0.70\!\to\!0.20$\\
75 & Oral GLP-1 launches & oral option $0.10\!\to\!0.70$; injectable barrier $0.50\!\to\!0.15$\\
88 & New indications (MASH / OSA / CKD) & new indications $0.20\!\to\!0.75$; muscle preservation $0.20\!\to\!0.55$\\
\bottomrule
\end{tabularx}
\caption{The regime-shift schedule used in \S\ref{sec:empirical}. Between shifts every topic also undergoes slow drift, so the target is never stationary. Magnitudes are illustrative and chosen to make the test demanding; they are not claims about real GLP-1 market values.}
\label{tab:shifts}
\end{table}

\section{Worked formalism}\label{app:worked}
This appendix works the three central mechanisms numerically, so the formalism is concrete and checkable. All figures are exact.

\subsection*{D.1\quad Value of information as decision-change (\protect\eqref{eq:evoi})}
A planner chooses between action $A$ (safety-priming) and $B$ (efficacy push). The world is in state ``primed-sensitive'' with probability $0.6$, else $0.4$. Payoffs (percentage-point lift):
\[
\begin{array}{l|cc}
 & \text{primed-sensitive }(0.6) & \text{not }(0.4)\\\hline
A & 12 & 4\\
B & 6 & 9
\end{array}
\]
\emph{Choosing blind} (before the signal), one compares expected payoffs:
$\Eb{A}=0.6(12)+0.4(4)=8.8$, $\Eb{B}=0.6(6)+0.4(9)=7.2$, so the best blind action is $A$ with value $8.8$.
\emph{Choosing informed} (after a signal that reveals the state), one picks the best action in each state:
$0.6\max(12,6)+0.4\max(4,9)=0.6(12)+0.4(9)=10.8$.
Hence
\[
\mathrm{EVoI}=10.8-8.8=2.0 .
\]
The value is strictly positive precisely because the information \emph{changes} the action in the second state (from $A$ to $B$); had $A$ remained optimal in both states, the two computations would coincide and $\mathrm{EVoI}=0$. This is the formal content of ``value is decision-change, not truth.''

\subsection*{D.2\quad Non-commuting touches and the sequence premium (\protect\eqref{eq:E1}--\protect\eqref{eq:E2}, Proposition~\ref{prop:noncommute})}
Two touches: $T_{\mathrm{m}}$ (mechanism, the prerequisite) and $T_{\mathrm{e}}$ (efficacy, gated). Parameters $\lambda=0.6$, $K=0.5$; $\kappa_{\mathrm m}g_{\mathrm m}=1.0(0.7)$, $\kappa_{\mathrm e}g_{\mathrm e}=0.8(0.9)$; decay is applied at the start of each period. Starting from $z=0$:

\emph{Order} $T_{\mathrm m}\!\to\!T_{\mathrm e}$. Period 1 delivers mechanism: $z_{\mathrm m}=1.0(0.7)=0.70$. Period 2 decays $z_{\mathrm m}\!\to\!0.42$, then delivers efficacy through the gate $\pi_{\mathrm e}=\tfrac{0.42}{0.42+0.5}=0.4565$: $z_{\mathrm e}=0.8(0.9)(0.4565)=0.3287$. \textbf{Terminal} $(z_{\mathrm m},z_{\mathrm e})=(0.42,\,0.3287)$.

\emph{Order} $T_{\mathrm e}\!\to\!T_{\mathrm m}$. Period 1 delivers efficacy with no priming, $\pi_{\mathrm e}=\tfrac{0}{0+0.5}=0$, so $z_{\mathrm e}=0$. Period 2 delivers mechanism: $z_{\mathrm m}=0.70$. \textbf{Terminal} $(z_{\mathrm m},z_{\mathrm e})=(0.70,\,0)$.

The terminal states differ, so $T_{\mathrm e}\!\circ\!T_{\mathrm m}\neq T_{\mathrm m}\!\circ\!T_{\mathrm e}$ (Proposition~\ref{prop:noncommute}). With decision weights $w_{\mathrm m}=0.3,\,w_{\mathrm e}=0.7$ and $\beta=1$ in $L=\sum_m w_m(1-e^{-\beta z_m})$,
\[
L(T_{\mathrm m}\!\to\!T_{\mathrm e})=0.2990,\qquad L(T_{\mathrm e}\!\to\!T_{\mathrm m})=0.1510,\qquad \Pi_{\mathrm{seq}}=\frac{0.2990-0.1510}{0.1510}=+98.0\% .
\]
Identical content and cost; the order alone nearly doubles the decision-weighted belief, because firing efficacy before priming wastes it on a closed gate.

\subsection*{D.3\quad Value as the cost of forgetting (\protect\eqref{eq:valforget})}
Three topics with true relevance $w^{*}=(0.9,\,0.2,\,0.5)$. The objective is
\[
J(M)=w^{*\top}\,\mathrm{softmax}\!\big(4\,\hat w(M)\big),
\]
where $\hat w(M)$ is the consolidation-weighted topic aggregate of the retained beliefs. Memory $M$ holds three beliefs:
\[
\begin{aligned}
b_1&=\langle\text{topic }1,\ 0.90\rangle && (\text{accurate, unique}),\\
b_2&=\langle\text{topic }1,\ 0.85\rangle && (\text{redundant}),\\
b_3&=\langle\text{topic }2,\ 0.80\rangle && (\text{wrong: truth }0.2).
\end{aligned}
\]
Then $J(M)=0.6004$, and the cost of forgetting each belief, $\val(i)=J(M)-J(M\setminus i)$, is
\[
\begin{array}{lccl}
\text{belief} & J(M\setminus i) & \val(i) & \text{reading}\\\hline
b_1\ \text{(accurate, unique)} & 0.5834 & +0.0170 & \text{costly to lose} \Rightarrow \text{keep}\\
b_2\ \text{(redundant)} & 0.6172 & -0.0168 & \text{removal \emph{helps}} \Rightarrow \text{forget}\\
b_3\ \text{(wrong, off-topic)} & 0.8687 & -0.2683 & \text{strongly harmful} \Rightarrow \text{forget first}
\end{array}
\]
The signs carry the whole idea. Forgetting $b_1$ has a positive cost (alignment falls), so $b_1$ is valuable. Forgetting the redundant $b_2$ has \emph{negative} cost --- the system is better off without it --- and forgetting the stale, wrong $b_3$ is strongly beneficial. ``Value is the cost of forgetting'' is therefore literal: the items worth keeping are exactly those expensive to lose, and the adaptive hazard \eqref{eq:hazardimpl} recovers the same ordering from \emph{available} signals alone --- $b_3$ has the highest contradiction (its stored $0.80$ clashes with fresh evidence near $0.20$), $b_2$ the highest redundancy (a corroborated sibling on topic~1), and $b_1$ is protected by topic relevance and uniqueness --- so the operational rule forgets $b_3$, then $b_2$, and keeps $b_1$, matching the oracle ranking $\val(b_3)<\val(b_2)<\val(b_1)$.

\subsection*{D.4\quad Higher-order value as a renormalisation invariant (\protect\eqref{eq:rg})}
Linearise the forget-and-revalue map $R$ about its fixed point and diagonalise it. Suppose two directions in value-space with multipliers $\mu_{\mathrm{rel}}=1.3$ and $\mu_{\mathrm{irr}}=0.5$. Iterating $R$ scales a component along each direction by its multiplier, so after $k$ rounds of coarse-graining the irrelevant component is suppressed as $0.5^{k}\!\to\!0$ while the relevant component persists ($1.3^{k}$, bounded by the consolidation floor). The \emph{spectral gap} $\mu_{\mathrm{rel}}/\mu_{\mathrm{irr}}=2.6>1$ certifies a clean separation. Higher-order value is, by definition, the projection onto the relevant eigenspace --- the structure that survives repeated forgetting --- and noise is what the flow destroys. A vanishing gap ($\mu_{\mathrm{rel}}\!\approx\!\mu_{\mathrm{irr}}$) would signal that the system can no longer tell signal from noise; the empirical relevance-spectrum diagnostic (F6) monitors exactly this ratio.


\end{document}